\documentclass{amsart}

\usepackage{amsmath}
\usepackage{amsthm}
\usepackage{amsfonts}
\usepackage{graphicx}
\usepackage{latexsym}
\usepackage{amssymb}
\usepackage{amscd}
\usepackage{booktabs}
\usepackage[small]{caption}
\usepackage[title]{appendix}

\newcommand{\field}[1]{\mathbb{#1}}
\newcommand{\C}{\field{C}}

\newcommand{\dS}{\field{S}}

\newcommand{\cA}{{\mathcal A}}
\newcommand{\cB}{{\mathcal B}}
\newcommand{\cC}{{\mathcal C}}

\newcommand{\cG}{{\mathcal G}}

\newcommand{\cP}{{\mathcal P}}

\newcommand{\cH}{{\mathcal H}}

\newcommand{\sbinom}[2]{\left[ \begin{array}{c} #1 \\ #2 \end{array} \right] }
\newcommand{\sbinomq}[2]{\sbinom{#1}{#2}_q }
\newcommand{\sbinomtwo}[2]{\sbinom{#1}{#2}_2 }

%*******************************************************************************
%                                                                              *
%                       The Preamble Continues                                 *
%                                                                              *
%*******************************************************************************

%----------------- The Title Declarations ------------------------------

\title{Subspace Packings}

\author[T.\,Etzion]{Tuvi Etzion}
\address{%Department of Computer Science\\
Tuvi Etzion\\
Technion, Haifa, %32000,
Israel}
\email{etzion@cs.technion.ac.il}

\author[S.\,Kurz]{Sascha Kurz}
\address{%Department of Communications and Networking\\
Sascha Kurz\\
University of Bayreuth\\ % School of Electrical Engineering\\
%P.O.\ Box 13000, FI-00076
Bayreuth, Germany}
\email{sascha.kurz@uni-bayreuth.de}

\author[K.\,Otal]{Kamil Otal}
\address{%Department of Electrical and Computer Engineering\\
Kamil Otal\\
T\"{U}BITAK BILGEM UEKAE\\
Gebze Turkey} % 92093, USA}
\email{kamil.otal@gmail.com}

\author[F.\,\"Ozbudak]{Ferruh \"Ozbudak\vspace*{7ex}}
\address{%Department of Mathematics\\
Ferruh \"Ozbudak\\
Middle East Technical University\\
%D-95440
Ankara, Turkey}
\email{ozbudak@metu.edu.tr}

\iffalse
%----------------- Keywords ------------------------------

\keyword{exact cover}
\keyword{Frobenius automorphism}
\keyword{Galois group}
\keyword{$q$-analog}
\keyword{Singer cycle}
\keyword{spread}
\keyword{Steiner system}

\fi

%----------------- Various Style Definitions -----------------------------

\pretolerance = 50
\tolerance = 100
\hyphenpenalty = 1
\interdisplaylinepenalty = 2500
\setlength{\emergencystretch}{3pt}
\clubpenalty = 300
\widowpenalty = 300
\displaywidowpenalty = 100
\hbadness = 10000
\vbadness = 6000
\hfuzz = 9.5pt

%------------------ Various \newtheorem Declarations -------------------

\newtheorem{theorem}{Theorem}
\newtheorem{lemma}[theorem]{Lemma}

\newtheorem{remark}[theorem]{Remark}
\newtheorem{construction}[theorem]{Construction}
\newtheorem{claim}[theorem]{Claim}
\newtheorem{corollary}[theorem]{Corollary}
\newtheorem{proposition}[theorem]{Proposition}
\newtheorem{definition}[theorem]{Definition}

%------------------ Various \newcommand Declarations -------------------

%---> Math operators ----------

%---> Changing style of inequalities ------

\renewcommand{\leq}{\leqslant}
\renewcommand{\geq}{\geqslant}

%---> Fields, ect ----------

\newcommand{\F}{\mathbb{F}}

%---> Gaussian coefficients ----------
\makeatletter
\DeclareRobustCommand{\sbinom}{\genfrac[]\z@{}}
\makeatother

%---> Various useful things ----------

%---> Things that save typing --------

%*******************************************************************************
%                                                                              *
%                 End of preamble and beginning of text                        *
%                                                                              *
%*******************************************************************************
\begin{document}

\begin{abstract}
The {Grassmannian} $\cG_q(n,k)$
is the set of all $k$-dimensional subspaces of the vector space~$\F_q^n$.
It is well known that codes in the Grassmannian
space can be used for error-correction in random network coding.
On the other hand, these codes are $q$-analogs of codes in the Johnson scheme, i.e.
constant dimension codes. These codes of the Grassmannian $\cG_q(n,k)$
also form a family of $q$-analogs of block designs and they are
called \emph{subspace designs}. The application of subspace codes has motivated
extensive work on the $q$-analogs of block designs.

In this paper, we examine one of the last families of $q$-analogs of block designs
which was not considered before. This family called \emph{subspace packings}
is the $q$-analog of packings. This family of designs was considered recently
for network coding solution for a family of multicast networks called the generalized
combination networks. A \emph{subspace packing} $t$-$(n,k,\lambda)^m_q$ is a set $\dS$
of $k$-subspaces from $\cG_q(n,k)$ such that each $t$-subspace of $\cG_q(n,t)$ is contained
in at most $\lambda$ elements of $\dS$. The goal of this work is to consider
the largest size of such subspace packings.
\end{abstract}

\maketitle
%\vspace{10ex}

\thispagestyle{empty}

%%\thispagestyle{empty}

%==============================================================================%
%                                                                              %
%   1. INTRODUCTION                                                            %
%                                                                              %
%==============================================================================%

\section{Introduction}
\label{sec:introduction}

A \emph{subspace packing} $t-(n,k,\lambda)_q^m$ is a set $\dS$ of $k$-subspaces (called \emph{blocks}) of $\F_q^n$ such that
each $t$-subspace of $\F_q^n$ is contained in at most $\lambda$ blocks. Throughout this paper $m$ in this notation refers to {\it multiplicity}, which complies with the notation already used in \cite{EtZh18}. 
The definition of a subspace packing is a straightforward
definition for $q$-analog of packing for set. Moreover, subspace packings have found recently another
nice application in network coding. It was proved in~\cite{EtZh18} that the code formed from the dual
subspaces (of dimension $n-k$) of a subspace packing is exactly what is required for a scalar
solution for a family of networks called the \emph{generalized combination networks}. This family of networks
was used in~\cite{EtWa18} to show that vector network coding outperforms scalar linear network coding on multicast networks.
The interested reader is invited to look in these paper for the required definition and the proof of the mentioned results.
For the network coding solution of the generalized combination networks repeated codes are allowed.
But, throughout our exposition we will assume
that there are no repeated blocks in the packing. This is the usual convention in block design and coding theory.

Let $\cA_q(n,k,t;\lambda)$ be the maximum number of $k$-subspaces in a $t-(n,k,\lambda)_q^m$ subspace design.
Although there are some upper bounds on $\cA_q(n,k,t;\lambda)$ and analysis of subspace designs in~\cite{EtZh18} the topic was hardly considered.
In~\cite{EtZh18} the authors mainly considered the related network coding problems and
a general analysis of the quantity $\cA_q(n,k,t;\lambda)$. The dual subspaces and the related codes
were also considered in~\cite{EtZh18}. For lack of space we will quote results in~\cite{EtZh18},
but not write them explicitly. Subspace packings are $q$-analog of packing designs which were
extensively studied, see~\cite{MiMu92,SWY06} and references therein.
The goal of the current work  is to present
a comprehensive study of subspace packings and to learn their upper bounds and constructions.
For lack of space, we will present only a few interesting bounds which are not straight forward
generalizations. The other will be presented in the full extended version of this paper.

The rest of this paper is organized as follows.
In Section~\ref{sec:upper} upper bounds are presented and in Section~\ref{sec:constructions} lower bounds
are presented. Conclusion and problems for future research are given in Section~\ref{sec:conclude}.

%%\section{Upper Bounds on the Size of Subspace Packings}
\section{Upper Bounds for Subspace Packings}
\vspace{-.25ex}
\label{sec:upper}

All the basic bounds (upper and lower) on $\cA_q (n,k,t;\lambda)$ for $\lambda =1$ can be
generalized for $\lambda >1$. The most basic bounds are the packing bound and the Johnson bounds~\cite{EtZh18}.
The combination of the packing bound and the Johnson bound for $(n-1)$-subspaces
implies:
\begin{proposition}
\label{prop_combination_packing_johnson_hyperplane}
If $n$, $k$, $t$, and $\lambda$ are positive integers such that $1 \leq t< k< n$
and $1 \leq \lambda \leq \sbinomq{n-t}{k-t}$, then $\cA_q(n,k,t;\lambda)\le$
$$
 \underset{0\le x\le \cA_q(n-1,k,s;\lambda)}{\max} \min\left\{
x+\left\lfloor\frac{\lambda\sbinomq{n-1}{t}-x\sbinomq{k}{t}}{\sbinomq{k-1}{s}}\right\rfloor,
\left\lfloor\frac{q^n-1}{q^{n-k}-1}\cdot x\right\rfloor
\right\}.
$$
\end{proposition}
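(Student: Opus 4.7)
The plan is to consider an arbitrary packing $\dS$ attaining $\cA_q(n,k,t;\lambda)$, fix a hyperplane $H\subset\F_q^n$ that contains the maximum possible number of blocks of $\dS$, and call this maximum $x$. Since the blocks of $\dS$ that lie in $H$ themselves form a subspace packing inside $H\cong\F_q^{n-1}$ with the same parameters $t,k,\lambda$, the unknown value $x$ automatically lies in the range $\{0,1,\dots,\cA_q(n-1,k,t;\lambda)\}$. The proof will then produce two separate upper bounds on $|\dS|$ as functions of $x$ and combine them.

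For the first (coverage/packing-inside-$H$) bound, I would double-count pairs $(T,B)$ where $T$ is a $t$-subspace of $H$ and $B\in\dS$ contains $T$. The number of such pairs is at most $\lambda\sbinomq{n-1}{t}$, since $\dS$ is a $t$-packing in $\F_q^n$ and every $t$-subspace of $H$ is still a $t$-subspace of $\F_q^n$. On the block side, each of the $x$ blocks lying inside $H$ contributes $\sbinomq{k}{t}$ pairs, while each block $B\not\subseteq H$ meets $H$ in a $(k-1)$-subspace (since $\dim(B\cap H)=k+(n-1)-n=k-1$) and contributes $\sbinomq{k-1}{t}$ pairs. Solving for the number $|\dS|-x$ of blocks outside $H$ gives
$$
|\dS|\;\le\; x + \left\lfloor \frac{\lambda\sbinomq{n-1}{t}-x\sbinomq{k}{t}}{\sbinomq{k-1}{t}}\right\rfloor.
$$
For the second (Johnson-type) bound, I would count incidences $(B,H')$ with $B\in\dS$, $B\subseteq H'$, and $H'$ a hyperplane. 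Every block lies in exactly $\sbinomq{n-k}{1}$ hyperplanes and, by the maximality in the choice of $H$, no hyperplane contains more than $x$ blocks of $\dS$. Therefore $|\dS|\sbinomq{n-k}{1}\le\sbinomq{n}{1}\cdot x$, which rearranges and rounds down to $|\dS|\le\left\lfloor\tfrac{q^n-1}{q^{n-k}-1}\cdot x\right\rfloor$.

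Both bounds hold simultaneously for the true value of $x$, so $|\dS|$ is at most their minimum; and since the worst possible packing corresponds to the worst admissible $x$, the final upper bound on $\cA_q(n,k,t;\lambda)$ is obtained by taking the maximum of this minimum over $x\in\{0,\dots,\cA_q(n-1,k,t;\lambda)\}$. I do not foresee any real obstacle in the argument: both ingredients are standard $q$-analog double-counting estimates. The only delicate point is the quantifier order, namely that one must first fix $x$, then take the $\min$ of the two bounds, and only then take the $\max$ over $x$, since $x$ is a feature of the (adversarial) extremal $\dS$ rather than a parameter the author can choose.
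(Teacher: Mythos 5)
Your proof is correct and is exactly the argument the paper intends: the proposition is stated there without an explicit proof, as ``the combination of the packing bound and the Johnson bound for $(n-1)$-subspaces,'' and your two double counts (pairs $(T,B)$ with $T$ a $t$-subspace of the hyperplane $H$ richest in blocks, and incidences $(B,H')$ over all hyperplanes) are precisely those two bounds, correctly combined via $\min$ before maximizing over $x$. The only remark worth adding is that the symbol $s$ appearing in the paper's statement is a leftover typo for $t$, which is how you (correctly) read it.
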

%\begin{proof}
%Let $\cC$ be a code with matching parameters and $H$ be an arbitrary hyperplane of $\mathbb{F}_q$. By $x$
%we denote the number of codewords of $\cC$ that are contained in $H$ and by $y$ those that are not contained
%in $H$, so that $|\cC|=x+y$. The $x$ codewords contained in $H$ cover $x\sbinomq{k}{t}$ out of the
%$\lambda\sbinomq{n-1}{t}$ $\lambda$-fold $s$-subspaces of $H$. Any of the $y$ codewords not contained in $H$ covers
%exactly $\sbinomq{k-1}{t}$ $s$-subspaces in $H$, so that $y\le \left\lfloor\frac{\lambda\sbinomq{n-1}{t}
%-x\sbinomq{k}{t}}{\sbinomq{k-1}{t}}\right\rfloor$. The largest possible value for $x$, call it $x^\star$,
%clearly gives the tightest such upper bound on $\cC$. Now assume that every hyperplane of $\mathbb{F}_q^n$
%contains at most $x^\star$ codewords, then counting gives $|\cC| \leq \left\lfloor\frac{q^n-1}{q^{n-k}-1}\cdot x\right\rfloor$.
%\end{proof}

\subsection{Bounds based on Inequalities}

The first new upper bound is based on using inequalities similar to~\cite{MR3543542} which used it
for an application on mixed-dimension subspace codes. We first give a technical auxiliary result.

\begin{lemma}
\label{lemma_standard_equations_special1}
Let $a_i$ be a non-negative number for each integer $i \geq 0$. If there exist numbers $\mu_0,\mu_1,\mu_2$ and a
positive integer $m$ such that $\sum_{i\ge 0} a_i=\mu_0$,
$\sum_{i\ge 0} ia_i=\mu_1 c$, $\sum_{i\ge 0} i(i-1)a_i \leq \mu_2c$, and $2m\mu_1>\mu_2$ then $c\le \frac{m(m+1)\mu_0}{2m\mu_1-\mu_2}$.
\end{lemma}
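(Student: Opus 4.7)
The plan is to exploit the fact that the $a_i$ are indexed by non-negative integers by choosing a convenient non-negative quadratic in $i$. Specifically, for every integer $i\ge 0$ the inequality
\[
(i-m)(i-m-1)\ge 0
\]
holds, since $m$ and $m+1$ are consecutive integers so one factor is non-positive and the other is non-negative (and both vanish when $i\in\{m,m+1\}$). This is the right "test polynomial" because it involves precisely the falling factorial $i(i-1)$ that appears in the second moment hypothesis.

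Next I would expand, using $i^2=i(i-1)+i$, to rewrite the inequality as
\[
i(i-1)\ge 2mi-m(m+1).
\]
Since $a_i\ge 0$, I may multiply by $a_i$ and sum over $i\ge 0$ without reversing the inequality, which gives
\[
\sum_{i\ge 0} i(i-1)a_i \;\ge\; 2m\sum_{i\ge 0} ia_i \;-\; m(m+1)\sum_{i\ge 0} a_i \;=\; 2m\mu_1 c - m(m+1)\mu_0.
\]
Combining this with the hypothesis $\sum_{i\ge 0} i(i-1)a_i\le \mu_2 c$ yields
\[
\mu_2 c \;\ge\; 2m\mu_1 c - m(m+1)\mu_0,
\]
which rearranges to $(2m\mu_1-\mu_2)c\le m(m+1)\mu_0$. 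The assumption $2m\mu_1>\mu_2$ makes the coefficient positive, and dividing produces the claimed bound.

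There is essentially no obstacle here: the only "creative" step is recognizing that $(i-m)(i-m-1)$ is the correct non-negative quadratic to average against $a_i$, after which the argument reduces to a single expansion and a rearrangement. Non-negativity of the $a_i$ and the integrality of the index $i$ are both used in an essential way; without the latter one would need $i\le m$ or $i\ge m+1$, and without the former the summation step could flip.
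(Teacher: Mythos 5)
Your proof is correct and is essentially the same argument as the paper's: both rely on the non-negativity of $\sum_{i\ge 0}(i-m)(i-m-1)a_i$ for integer $m$, combined with the three moment conditions and the sign condition $2m\mu_1>\mu_2$ to rearrange into the claimed bound. The only difference is presentational (you start from the pointwise inequality and sum, while the paper forms the linear combination of the three sums first and then recognizes it as that non-negative quantity).
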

\begin{proof}
Let $m$ be an arbitrary integer, then
$
m(m+1)\sum_{i\ge 0} a_i -2m \sum_{i\ge 0} ia_i$ $+ \sum_{i\ge 0} i(i-1)a_i  \leq m(m+1)\mu_0 -2m \mu_1 c + \mu_2c
$,
which implies that
$
\sum_{i\ge 0} (i-m)(i-m-1)a_i\le m(m+1)\mu_0-2m\mu_1c+\mu_2c
$.
Since $\sum_{i\ge 0} (i-m)(i-m-1)a_i \geq 0$, the last inequality is reduced to
$
0 \leq m(m+1)\mu_0-2m\mu_1c+\mu_2c
$,
which implies that
$
c \leq \frac{m(m+1)\mu_0}{2m\mu_1-\mu_2}
$.
\end{proof}
Minimizing the upper bound for $c$ in Lemma~\ref{lemma_standard_equations_special1}
as a function of $m$ induces $m=\frac{\mu_2\pm\sqrt{\mu_2^2+\mu_2}}{2\mu_1}$. Assuming $\mu_1>0$, $\mu_2\ge 0$,
the optimal choice would be $m=\frac{\mu_2+\sqrt{\mu_2^2+\mu_2}}{2\mu_1}$ since we have to satisfy $2m\mu_1>\mu_2$. Moreover, $m$ has
to be an integer, so that $m=\left\lceil \frac{\mu_2+\sqrt{\mu_2^2+\mu_2}}{2\mu_1} \right\rceil$ is a good choice.

\begin{proposition}
\label{prop_quadratic_bound_1}
If $2(q+1)m > \sbinomq{n-2}{1}$ for a positive integer $m$ and $n \geq 3$, then
$\cA_q(n,n-2,n-3;2)\le \left\lfloor \sbinomq{n}{1}\cdot\frac{m(m+1)}{2(q+1)m-\sbinomq{n-2}{1}}\right\rfloor$.
\end{proposition}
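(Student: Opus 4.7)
The strategy is to apply Lemma~\ref{lemma_standard_equations_special1} with the statistic attached to hyperplanes of $\F_q^n$. Let $\dS$ be an $(n-3)$-$(n,n-2,2)_q^m$ subspace packing, set $c=|\dS|$, and for each hyperplane $H$ of $\F_q^n$ let $d_H$ denote the number of blocks of $\dS$ contained in $H$. Define $a_i=\#\{H:d_H=i\}$. The goal is to identify $\mu_0,\mu_1,\mu_2$ meeting the lemma's hypotheses and then read off the bound.

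The values $\mu_0=\sbinomq{n}{1}$ and $\mu_1=q+1$ come from routine double counts: the number of hyperplanes is $\sbinomq{n}{1}$, and each $(n-2)$-subspace lies in exactly $\sbinomq{2}{1}=q+1$ hyperplanes. Together these give $\sum_i a_i=\sbinomq{n}{1}$ and $\sum_i i\, a_i=(q+1)c$.

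The nontrivial step is to show $\sum_i i(i-1)a_i\le \sbinomq{n-2}{1}\cdot c$, so that $\mu_2=\sbinomq{n-2}{1}$. The left-hand side counts ordered triples $(H,B_1,B_2)$ with $B_1\ne B_2$ in $\dS$ and $B_1,B_2\subseteq H$. Since any two distinct $(n-2)$-subspaces span a subspace of dimension $n-1$ or $n$, exactly one hyperplane contains both when $\dim(B_1\cap B_2)=n-3$ and none otherwise, so the sum equals the number of ordered pairs $(B_1,B_2)$ of distinct blocks with $\dim(B_1\cap B_2)=n-3$. I will dualize: writing $\ell_U=\#\{B\in\dS:U\subseteq B\}$ for each $(n-3)$-subspace $U$, this count equals $\sum_{U}\ell_U(\ell_U-1)$, because whenever $\dim(B_1\cap B_2)=n-3$ there is exactly one $(n-3)$-subspace in both blocks, namely $B_1\cap B_2$ itself. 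The packing condition $\lambda=2$ forces $\ell_U\le 2$, hence $\ell_U(\ell_U-1)\le\ell_U$, and summing gives $\sum_U\ell_U = c\,\sbinomq{n-2}{1}$ by counting incident $(U,B)$ pairs on the block side.

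Feeding $\mu_0=\sbinomq{n}{1}$, $\mu_1=q+1$, $\mu_2=\sbinomq{n-2}{1}$ into Lemma~\ref{lemma_standard_equations_special1}, whose hypothesis $2m\mu_1>\mu_2$ is exactly the assumed $2(q+1)m>\sbinomq{n-2}{1}$, yields $c\le \frac{m(m+1)\sbinomq{n}{1}}{2(q+1)m-\sbinomq{n-2}{1}}$, and the floor follows from $c\in\Z$. The main obstacle is the dual-counting step collapsing the a priori quadratic-in-$c$ quantity $\sum_H d_H(d_H-1)$ to something linear in $c$; the identification $\sum_H d_H(d_H-1)=\sum_U\ell_U(\ell_U-1)$ and the decisive use of $\ell_U\le 2$ are the substantive parts of the argument, while everything else is bookkeeping.
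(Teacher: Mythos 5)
Your proposal is correct and follows essentially the same route as the paper: the same hyperplane statistics $a_i$, the same values $\mu_0=\sbinomq{n}{1}$, $\mu_1=q+1$, $\mu_2=\sbinomq{n-2}{1}$, and the same application of Lemma~\ref{lemma_standard_equations_special1}. Your dual-counting of $\sum_H d_H(d_H-1)$ via the multiplicities $\ell_U$ of $(n-3)$-subspaces is just a reorganization of the paper's per-codeword count (each codeword meets at most $\sbinomq{n-2}{1}$ others in an $(n-3)$-subspace because $\lambda=2$), so the substance is identical.
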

\begin{proof}
Let $\cC$ be a code with $\cA_q(n,n-2,n-3;2)$ codewords and for each $i \geq 1$ let $a_i$ denote
the number of $(n-1)$-subspaces (hyperplanes) of $\F_q^n$
containing exactly $i$ codewords of $\cC$. Since there are $\sbinomq{n}{1}$ distinct
$(n-1)$-subspaces we clearly have
$
\sum_{i\ge 0} a_i = \sbinomq{n}{1}
$.
Each codeword $X$ is an $(n-2)$-subspace and hence it is contained in $\sbinomq{2}{1}$ hyperplanes.
On the other hand summing the number of codewords in all the $(n-1)$-subspaces (with repetitions)
is $\sum_{i\ge 1} ia_i$ and hence we have
$
\sum_{i\ge 0} ia_i = \sbinomq{2}{1} \cA_q(n,n-2,n-3;2)
$.
The number of ordered pairs of codewords from $\cC$ which are contained in a given hyperplane~$H$
which contains exactly $i$ codewords is $i(i-1)$. Hence, the number of of ordered pairs of codewords
which are contained in the same hyperplane with $i$ codewords is $i(i-1) a_i$. Therefore, the number of
such ordered pairs in all $(n-1)$-subspaces of $\F_q^n$ is $\sum_{i\ge 0} i(i-1)a_i$.
For a given codeword $X$ of dimension $n-2$, the number of other codewords which intersect $X$
in an $(n-3)$-subspace is at most $\sbinomq{n-2}{n-3}=\sbinomq{n-2}{1}$ since any $(n-3)$-subspace can be contained in at
most $\lambda =2$ codewords. Each two codewords which are contained in the same $(n-1)$-subspace intersect
in exactly $(n-3)$-subspace. Hence, the number of ordered pair in all the hyperplanes is at most
$\sbinomq{n-2}{1}A_q(n,n-2,n-3;2)$. Therefore, we have
$
\sum_{i\ge 0} i(i-1)a_i  \leq \sbinomq{n-2}{1} \cA_q(n,n-2,n-3;2)
$.
Thus, we can apply Lemma~\ref{lemma_standard_equations_special1} with $m$, $\mu_0=\sbinomq{n}{1}$,
$\mu_1=\sbinomq{2}{1}=q+1$, and $\mu_2=\sbinomq{n-2}{1}$ and obtain the claim of the proposition. 
(Note that $2m\mu_1>\mu_2$.)
\end{proof}

\subsection{Upper Bounds based on $q^r$-Divisible Codes}
\label{subsec_divisible}

The Johnson bounds~\cite{EtZh18} can
be improved by using $q^r$-divisible codes~\cite{upper_bounds_cdc}. A \emph{$q^r$-divisible code}
is a linear block code in the Hamming scheme where all weights are divisible
by $q^r$. This family of codes has been introduced by Ward~\cite{ward1999introduction}.

\begin{lemma}(\cite[Lemma 4]{upper_bounds_cdc})
\label{lemma_is_divisible}
Let $\cP$ be the multiset of 1-subspaces generated from a
non-empty multiset of subspaces of $\F_q^n$ all having dimension at least $k \geq 2$ and let $\cH$ be an $(n-1)$-subspace of $\F_q^n$.
Then,
$
|\cP| \equiv |\cP \cap \cH| \pmod {q^{k-1}}
$.
\end{lemma}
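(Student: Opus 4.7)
The plan is to reduce the claim to a single block and then sum. Suppose the multiset is $\{U_1,\dots,U_N\}$ with $\dim U_j = d_j \geq k$. Then $|\cP| = \sum_j \sbinomq{d_j}{1}$ and $|\cP \cap \cH| = \sum_j |\{\text{1-subspaces of }U_j \cap \cH\}|$, so it suffices to show that for each single subspace $U$ of dimension $d \geq k$, the quantity $\sbinomq{d}{1} - |\{\text{1-subspaces of }U\cap\cH\}|$ is divisible by $q^{k-1}$.

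For a single $U$ there are two cases, governed by whether $U \subseteq \cH$. If $U \subseteq \cH$ then $U \cap \cH = U$ and the difference is $0$, which is trivially divisible by $q^{k-1}$. Otherwise $U \not\subseteq \cH$, and since $\cH$ is a hyperplane we get $\dim(U \cap \cH) = d-1$. The number of $1$-subspaces of $U\cap\cH$ is therefore $\sbinomq{d-1}{1} = \frac{q^{d-1}-1}{q-1}$, and the difference becomes
\[
\sbinomq{d}{1} - \sbinomq{d-1}{1} = \frac{q^d-1}{q-1} - \frac{q^{d-1}-1}{q-1} = \frac{q^{d-1}(q-1)}{q-1} = q^{d-1}.
\]
Since $d \geq k$, this is $q^{d-1}$, which is divisible by $q^{k-1}$.

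Summing these block-by-block contributions, $|\cP| - |\cP \cap \cH| = \sum_j \left(\sbinomq{d_j}{1} - |\{\text{1-subspaces of }U_j\cap\cH\}|\right)$ is a sum of terms each of which is $0$ or $q^{d_j-1}$ with $d_j \geq k$, hence divisible by $q^{k-1}$. This gives the desired congruence $|\cP|\equiv|\cP\cap\cH|\pmod{q^{k-1}}$. There is no real obstacle here; the only thing to be careful about is the handling of multisets (each block contributes independently to both $\cP$ and $\cP\cap\cH$) and the identity $\sbinomq{d}{1}-\sbinomq{d-1}{1}=q^{d-1}$, which is the single computational step doing the work.
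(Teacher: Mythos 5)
Your proof is correct. The paper itself gives no proof of this lemma --- it is quoted directly from \cite[Lemma 4]{upper_bounds_cdc} --- so there is nothing internal to compare against; your block-by-block reduction, splitting into the cases $U\subseteq\cH$ (contribution $0$) and $U\not\subseteq\cH$ (contribution $\sbinomq{d}{1}-\sbinomq{d-1}{1}=q^{d-1}$ with $d\geq k$), is exactly the standard argument behind the cited result, and the multiset bookkeeping is handled correctly since each block contributes independently to both $|\cP|$ and $|\cP\cap\cH|$.
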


If we form a generator matrix from the column vectors associated with $\cP$, i.e. one representative from each 1-subspace,
then the generated code will be a linear $q^{k-1}$-divisible code.
Let $c$ be a codeword of the code and $\cH$ be the corresponding hyperplane. Then,
$\operatorname{wt}(c) =|\cP| - |\cP \cap \cH|$, which is divisible by $q^{k-1}$.

Associating the multiset $\cP$ with a weight function $\omega$ that counts the multiplicity of every point of $\F_q^n$.
If $\lambda$ is an upper bound for $\omega$, we define
the $\lambda$-complement $\overline{\cP}$ of $\cP$ via the weight function $\lambda-\omega(\cP)$.
As shown in~\cite[Lemma 2]{upper_bounds_cdc} we also have
$|\overline{\cP}| \equiv |(\overline{\cP}\cap \cH)| \pmod {q^{k-1}}$ for every hyperplane $\cH$, i.e., a $q^{k-1}$-divisible
code of length $|\overline{\cP} |$ must exist.

As an example consider the following application of the Johnson bound:
$$
A_2(9,4,2;1)\le \left\lfloor\sbinomtwo{9}{1} A_2(8,3,1;1)/\sbinomtwo{4}{1} \right\rfloor=\left\lfloor\frac{17374}{15}\right\rfloor
=\left\lfloor1158+\frac{4}{15}\right\rfloor~.
$$
If $1158$ would be attained, then there would be a $2^3$-divisible code of length $4$. For cardinality
$1157$ there would be a $2^3$-divisible code of length $4+15=19$. Since no such codes exist, we have $A_2(9,4,2;1)\le 1156$.
Fortunately, the possible lengths of $q^r$-divisible codes over $\F_q$ have been
completely characterized in \cite{upper_bounds_cdc}. Each $t$-subspace
is $q^{t-1}$-divisible such that each $q^j$-fold copy of an $(t-j)$-subspace is $q^{t-1}$-divisible for all $0\le j<t$. Via concatenation
we see that there exists a $q^r$-divisible code of length $n=\sum_{i=0}^r a_i\cdot q^i\cdot\sbinomq{r+1-i}{1}$ for all $a_i\in\mathbb{N}_{\ge 0}$
for $0\le i\le r$. \cite[Theorem 4]{upper_bounds_cdc} states that a $q^r$-divisible code of length $n$ exists if and only if $n$ admits such
a representation as a non-negative integer linear combination of $q^i\cdot\sbinomq{r+1-i}{1}$ for $0\le i\le r$. Moreover, if
$n=\sum_{i=0}^r a_i\cdot q^i\cdot\sbinomq{r+1-i}{1}$ with $0\le a_i\le q-1$ for $0\le i\le r-1$ and $a_r<0$, then no $q^r$-divisible code
of length $n$ exists. In our example of $2^3$-divisible codes the possible summands are $15$, $14$, $12$, and $8$. The representations
$4=0\cdot 15+0\cdot 14+1\cdot 12-1\cdot 8$ and $19=1\cdot 15+0\cdot 14+1\cdot 12-1\cdot 8$ implies that no $2^3$-divisible
codes of lengths $4$ or $19$ exists.
We can reduce until
the remainder is a possible length of a $q^{k-1}$-divisible code. For this purpose we define

\begin{definition}
\label{def:newdef}
Let $\left\{a / \sbinomq{k}{1}\right\}_k$ denote the maximum $b\in\mathbb{N}$ for which
$a-b\cdot \sbinomq{k}{1}$ is a non-negative integer that is attained as length of some $q^{k-1}$-divisble code. 
%% that can be represented as a linear combination of the integers
%% $\frac{q^{i+1}-1}{q-1} \cdot q^{k-1-i}$, $0 \leq i \leq k-1$.
\end{definition}

An efficient algorithm for the computation of $\left\{a / \sbinomq{k}{1} \right\}_k$ was given in~\cite{upper_bounds_cdc}.
The Johnson bound is improved as follows.
\begin{proposition}
\label{johnson_bound_point_inproved}
If $n$, $k$, $t$, and $\lambda$ are positive integers such that $1\le t\le k\le n$
and $1 \leq \lambda \leq \sbinomq{n-t}{k-t}$, then
$$A_q(n,k,t;\lambda)\le \left\{\sbinomq{n}{1} \cdot A_q(n-1,k-1,t-1;\lambda)/\sbinomq{k}{1} \right\}_k~.$$
\end{proposition}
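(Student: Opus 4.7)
The plan is to mimic the classical point/hyperplane argument behind the Johnson bound, but to retain the extra $q^{k-1}$-divisibility information that every hyperplane contributes, and then read off an upper bound from the characterization of admissible lengths of $q^{k-1}$-divisible codes. Let $\dS$ be an optimal packing with $|\dS|=A_q(n,k,t;\lambda)$ and set $M:=A_q(n-1,k-1,t-1;\lambda)$. For each point (i.e.\ 1-subspace) $P$ of $\F_q^n$, quotienting by $P$ turns the blocks of $\dS$ through $P$ into a $(t-1)$-$(n-1,k-1,\lambda)_q^m$ packing in $\F_q^n/P$; hence the number $d_P$ of blocks of $\dS$ through $P$ is at most $M$. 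Define the non-negative integer weight function $\omega(P):=M-d_P$ on the points of $\F_q^n$. Since each block of $\dS$ contains exactly $\sbinomq{k}{1}$ points, double counting gives
$$|\omega|:=\sum_P \omega(P)=\sbinomq{n}{1} M-|\dS|\sbinomq{k}{1}.$$

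The key step is to verify that for every hyperplane $\cH$ of $\F_q^n$ the slack $|\omega|-|\omega\cap\cH|$ is divisible by $q^{k-1}$, so that Lemma~\ref{lemma_is_divisible} (applied to the multiset of points of $\F_q^n$ in which each $P$ is taken with multiplicity $\omega(P)$) yields a $q^{k-1}$-divisible linear code of length $|\omega|$. A direct computation, using the identities $\sbinomq{n}{1}-\sbinomq{n-1}{1}=q^{n-1}$ and $\sbinomq{k}{1}-\sbinomq{k-1}{1}=q^{k-1}$, gives
$$|\omega|-|\omega\cap\cH|=q^{n-1} M-\sum_{B\in\dS}\bigl(\sbinomq{k}{1}-|B\cap\cH|\bigr),$$
and each summand in the sum over blocks equals either $0$ (when $B\subseteq\cH$) or $q^{k-1}$ (when $\dim(B\cap\cH)=k-1$). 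Because $n\ge k$, the constant term $q^{n-1}M$ is also divisible by $q^{k-1}$, so the entire right-hand side is a multiple of $q^{k-1}$, as required.

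Having produced a $q^{k-1}$-divisible code of length $\sbinomq{n}{1}M-|\dS|\sbinomq{k}{1}\ge 0$, Definition~\ref{def:newdef} immediately gives $|\dS|\le\bigl\{\sbinomq{n}{1} M/\sbinomq{k}{1}\bigr\}_k$, which is the stated bound. The only real obstacle is the divisibility bookkeeping in the middle step; in particular one has to check that the ``$M$'' part of $\omega$ contributes the harmless constant $q^{n-1}M$ to the slack, so that $\omega$ and not just the multiset of block-points defines a divisible configuration. Everything else is a straightforward repackaging of the standard Johnson inequality through the $q^r$-divisible code framework recalled earlier in this section.
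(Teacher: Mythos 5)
Your proof is correct and follows essentially the same route as the paper: pass to the multiset of points covered by the blocks, bound each point's multiplicity by $A_q(n-1,k-1,t-1;\lambda)$ via the quotient argument, take the complement, observe it is $q^{k-1}$-divisible, and invoke Definition~\ref{def:newdef}. The only difference is that you verify the divisibility of the complement by a direct hyperplane computation, whereas the paper cites this as a known fact about $\lambda$-complements of $q^{k-1}$-divisible multisets from \cite{upper_bounds_cdc}; this is a presentational rather than a substantive difference.
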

\begin{proof}
Let $\cP$ be the $q^{k-1}$-divisible multiset of points of the
codewords, see Lemma \ref{lemma_is_divisible}. In $\cP$ every point has multiplicity at
most $A_q(n-1, k-1, s-1; \lambda)$ so that the $A_q(n-1, k-1, s-1; \lambda)$-complement is also
$q^{k-1}$-divisible. Thus, the claim follows from Definition~\ref{def:newdef}.
\end{proof}

Proposition~\ref{johnson_bound_point_inproved} gives
$
A_2(6,4,3;2) \leq  \left\{63\cdot A_2(5,3,2;2)/ 15\right\}_4=
\left\{63\cdot 32/ 15\right\}_4$ $=132
$,
while the Johnson bound only gives $A_2(6,4,3;2)\le 134$.
This specific bound is further improved since its parameters are small. For larger parameters
further such improvements are unknown.

\section{Constructions for Subspace Packings}
\vspace{-.25ex}
\label{sec:constructions}

The echelon-Ferrers construction (see~\cite{EtSt16} and references therein) and its generalizations
are probably the most successful constructions when we are given a set
of parameters $n$, $k$, $t$, and $\lambda$, such that $n/2 \geq k > t$.
These constructions are using rank-metric codes and in particular maximum rank distance (MRD in short)
codes~\cite{EtSi13,EtSt16} (we denote the rank-distance by $d_R$). But, there are some
other constructions that for some parameters are better than the echelon-Ferrers Construction.
The generalization of the \emph{linkage construction}~\cite{MR3543532,heinlein2017asymptotic} is one such example which is
not a straightforward generalization.
For small parameters the linkage construction is as good as the echelon-Ferrers Construction (see~\cite{heinlein2016tables}).

\subsection{A variant of the linkage construction}
\label{subsec_linkage}

An $\alpha-(n,k,\delta)_q^c$ covering Grassmanian
code $\cC$ consists of a set of $k$-subspaces of $\F_q^n$ such that every set of $\alpha$ codewords span
a subspace of dimension at least $\delta+k$. The maximum size of a related code is denoted by $\cB_q(n,k,\delta;\alpha)$.
It was proved in~\cite{EtZh18} that
$
\cA_q(n,k,t;\lambda)=\cB_q(n,n-k,k-t+1;\lambda+1)~,
$
and
$
\cB_q(n,k,\delta;\alpha)=\cA_q(n,n-k,n-k-\delta+1;\alpha-1)
$.
%Hence, any result on subspace design can be translated to a result on covering Grassmannian code and vice versa.

Finally, we will use a simple connection between the subspace distance of two $k$-subspaces $U$ and $V$ of $\F_q^n$,
and a related rank for the row space of these two subspaces
%%\begin{equation}
$
d_S (U,V)=2\dim(U+W)-\dim(U)-\dim(V)=2\left(\operatorname{rk}\begin{pmatrix}\tau(U)\\\tau(V)\end{pmatrix}-k\right)
$.
%%\end{equation}
Here $\tau(U)$ and $\tau(V)$ are $k \times n$ matrices over $\F_q$ whose row spaces are $U$ and $V$.
Similarly, if $U$ and $V$ arise from lifting two matrices $M_1$ and $M_2$, then
$d_S(U,V)\ge 2\operatorname{rk}(M_1-M_2)=2 d_R (M_1,M_2)$.

\begin{theorem}
\label{thm:linkage}
	Let $1 \leq \delta \leq k$, $k+\delta\leq n$ and $2\leq \alpha \leq q^k+1$ be integers.
	\begin{enumerate}
		\item\label{1} If $n<k+2\delta,$ then
		$\cB_q(n,k,\delta ;\alpha)\geq (\alpha -1)q^{\max\{k,n-k\}(\min\{k,n-k\}-\delta+1)}$.
		
		\item\label{2} If $n\geq k+2\delta,$ then for each $t$ such that $\delta \leq t \leq n-k-\delta$, we have
		\begin{enumerate}
			\item\label{2a} If $t<k$, then			
			$\cB_q(n,k,\delta ;\alpha)\geq (\alpha -1)q^{k(t-\delta +1)} \cB_q(n-t,k,\delta ;\alpha)$.

			\item\label{2b} If $t\geq k$, then 			
$\cB_q(n,k,\delta ;\alpha)\geq (\alpha -1)q^{t(k-\delta +1)} \cB_q(n-t,k,\delta ;\alpha)+\cB_q(t+k-\delta,k,\delta;\alpha)$.
		\end{enumerate}
	\end{enumerate}
\end{theorem}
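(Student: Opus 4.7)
The plan is to prove Theorem~\ref{thm:linkage} by explicit construction in each case, adapting the linkage construction of~\cite{MR3543532,heinlein2017asymptotic} to the covering-code setting.

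For part~(1), I would take $(\alpha-1)$ disjoint translates of a lifted MRD code. Let $\cM\subseteq\F_q^{k\times(n-k)}$ be a Gabidulin MRD code with rank distance $\delta$, so that $|\cM|=q^{\max\{k,n-k\}(\min\{k,n-k\}-\delta+1)}$. Choose coset representatives $B_1=0,B_2,\dots,B_{\alpha-1}$ from distinct cosets of $\cM$ in $\F_q^{k\times(n-k)}$; the bound $\alpha\leq q^k+1$ together with $n<k+2\delta$ should ensure that enough cosets are available. Define
\[\cD=\bigcup_{i=1}^{\alpha-1}\bigl\{\operatorname{rowspace}[I_k\mid M+B_i]:M\in\cM\bigr\},\]
of size $(\alpha-1)|\cM|$. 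For any $\alpha$ codewords of $\cD$, pigeonhole on the $(\alpha-1)$ translate indices places two within the same translate; their $M$-difference then lies in $\cM\setminus\{0\}$ and hence has rank $\geq\delta$, so the subspace distance of these two codewords is $\geq 2\delta$, their span has dimension $\geq k+\delta$, and therefore so does the span of all $\alpha$ codewords.

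For parts~(2a) and~(2b), my plan is a recursive linkage. Take an optimal $\alpha$-covering code $\cC_0\subseteq\cG_q(n-t,k)$ of size $\cB_q(n-t,k,\delta;\alpha)$, with each $U\in\cC_0$ represented by its RREF matrix $A_U\in\F_q^{k\times(n-t)}$, and take $\cM\subseteq\F_q^{k\times t}$ to be a union of $(\alpha-1)$ cosets of an MRD code of rank distance $\delta$; this underlying MRD code has size $q^{k(t-\delta+1)}$ when $t<k$ and $q^{t(k-\delta+1)}$ when $t\geq k$, producing the multiplicative term in each part. The ``linked'' codewords are
\[\cD_{\mathrm{link}}=\bigl\{\operatorname{rowspace}[M\mid A_U]:M\in\cM,\ U\in\cC_0\bigr\}.\]
In part~(2b), I would additionally include the ``embedded'' codewords $\operatorname{rowspace}[A_V\mid 0]$ for $V$ ranging over an optimal $\alpha$-covering code in $\cG_q(t+k-\delta,k)$ (zero-padded to length $n$), contributing the additive term $\cB_q(t+k-\delta,k,\delta;\alpha)$; these embedded codewords project onto the last $n-t$ coordinates with rank $\leq k-\delta<k$, so they are automatically disjoint from the linked family. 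For covering verification, projecting onto the last $n-t$ coordinates yields $\dim(W_1+\cdots+W_\alpha)\geq\dim(U_1+\cdots+U_\alpha)$ for linked codewords, so whenever the $U$-projections are $\alpha$ distinct elements of $\cC_0$ the covering property of $\cC_0$ concludes the argument; in the remaining (repeated-$U$) cases, the MRD-coset structure of $\cM$ is used to locate a pair of codewords whose $M$-parts lie in a common coset, giving $\operatorname{rk}(M_i-M_j)\geq\delta$ and $\dim(W_i+W_j)\geq k+\delta$.

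The main obstacle will be the covering verification in the repeated-$U$ cases of parts~(2a) and~(2b): here one must delicately combine pigeonhole on the $(\alpha-1)$ cosets of $\cM$ with the hypothesis $\alpha\leq q^k+1$ to guarantee an MRD-rank gap of at least $\delta$ between some pair among the $\alpha$ codewords, and the naive pigeonhole ``$\alpha$ into $\alpha-1$ cosets'' only yields same-coset codewords whose $U$-parts may still differ, so extra care on how the coset labels interact with the $U$-labels is required. A secondary subtlety specific to part~(2b) is verifying that mixed selections drawing from both the linked and embedded families still span a subspace of dimension $\geq k+\delta$; this should follow by projecting onto the first $t$ and the last $n-t$ coordinate blocks separately and combining the MRD rank-distance in the linked block with the covering property of the embedded code.
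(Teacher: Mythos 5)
Your overall architecture matches the paper's: part (1) is a union of $\alpha-1$ disjoint translates of a lifted MRD code with a pigeonhole argument, and parts (2a)/(2b) are a recursive linkage onto an optimal code in $\F_q^{n-t}$, plus an appended zero-prefixed code supplying the additive term in (2b). Part (1) goes through as you describe, because there every codeword has the same prefix $I_k$, so any two distinct codewords differ only in their MRD parts and the pigeonhole on the $\alpha-1$ translates suffices.

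The genuine gap is exactly where you flag ``the main obstacle'': the mixed case in (2a)/(2b), where among the $\alpha$ chosen codewords some $U$-parts coincide and some differ, say $U_1\neq U_2=U_3$. Your proposed fix --- pigeonhole to find two codewords whose $M$-parts lie in a common coset --- does not work: such a pair need not have equal $U$-parts, and for a pair with distinct $U$-parts the bound $\operatorname{rk}(M_i-M_j)\geq\delta$ is not usable, since the row-reduced residual block is $[\,M_j-M_i\mid A_{U_j}-A_{U_i}\,]$ rather than $[\,M_j-M_i\mid 0\,]$. Worse, with \emph{arbitrary} coset representatives the pair $[\,M_2\mid A_{U_2}\,]$, $[\,M_3\mid A_{U_2}\,]$ with $M_2,M_3$ in different cosets only guarantees $\operatorname{rk}(M_2-M_3)\geq 1$, so the block-triangular rank estimate yields only $(k+1)+1$, which fails for $\delta\geq 3$. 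The paper closes this by not taking arbitrary cosets: following~\cite{EtSi13} the $\alpha-1$ translates $C_1,\dots,C_{\alpha-1}$ are chosen so that the \emph{union} $C_1\cup\dots\cup C_{\alpha-1}$ has minimum rank distance $\delta-1$ (this, not mere coset availability, is where $2\leq\alpha\leq q^k+1$ is used). Then in the mixed case one combines $\operatorname{rk}\begin{pmatrix}v_1\\ v_2\end{pmatrix}\geq k+1$ (from $U_1\neq U_2$) with $\operatorname{rk}(M_3-M_2)\geq\delta-1$ (from the union's minimum distance) to obtain $(k+1)+(\delta-1)=k+\delta$. You must add this stronger property of the translate family to your construction; without it the covering verification in the mixed case is false as stated.
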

\begin{remark}
Note that the length of vectors is expected to be greater than or equal to $ k+\delta$. However, in Case \ref{2b} of Theorem \ref{thm:linkage}, there is a possibility that $t+k-\delta<k+\delta$ for $\cB_q(t+k-\delta,k,\delta;\alpha)$. In such situations, we consider the following convention:
$
\cB_q(t+k-\delta,k,\delta;\alpha)=\min\left\{\alpha-1, {t+k-\delta \brack k}_q  \right\}
$.
\end{remark}
The proof of Theorem~\ref{thm:linkage} will be in a few steps.
\noindent
{\bf Case \ref{1}: $k+\delta \leq n<k+2\delta$}
\begin{construction}
\label{const1}
Let $I_k$ denote the $k\times k$ identity matrix over $\mathbb{F}_q$ and let $C_1\subseteq \mathbb{F}_q^{k\times (n-k)}$
be a linear MRD code with minimum rank distance $\delta$. Let $C_1, C_2,\dots ,C_{\alpha -1}$
be $\alpha -1$ pairwise disjoint MRD codes with minimum rank
distance $\delta$ obtained by translating $C_1$ in a way that (see~\cite{EtSi13})
$
d_R(C_1\cup\dots\cup C_{\alpha -1})=\delta -1
$.
Let $C \triangleq C_1\cup\dots\cup C_{\alpha -1}$. Lifting the matrices in $C$,
$$(\alpha -1) q^{\max\{k,n-k\}(\min\{k,n-k\}-\delta +1)}$$ different matrices
of size $k\times n$, in reduced row echelon form (RREF in short), are constructed.
Let $\mathrm{RREF}(\C)$ denote the set of these matrices, and let $\C$
be the set of rowspaces of matrices in $\mathrm{RREF}(\C)$.
\end{construction}
\begin{claim}
Let $\C$ be the set of $k$-subspaces obtained in Construction~\ref{const1}. Then we have
$
\dim (U_1+\dots +U_{\alpha})\geq k+\delta
$,
for each $\alpha$ distinct codewords $U_1,\dots ,U_\alpha \in \C$.
\end{claim}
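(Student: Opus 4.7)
The plan is to combine two ingredients: a pigeonhole argument on the $\alpha-1$ cosets $C_1,\dots,C_{\alpha-1}$, and the simple fact recalled just before the theorem that for lifted codewords $U_i=\operatorname{rowspan}[I_k\mid M_i]$ we have $d_S(U_i,U_j)\geq 2\operatorname{rk}(M_i-M_j)$. Since the identity block on the left forces equality rather than inequality in this case, two lifted codewords actually span a subspace of dimension exactly $k+\operatorname{rk}(M_i-M_j)$.

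First I would fix $\alpha$ pairwise distinct codewords $U_1,\dots,U_\alpha\in\cC$ and write $U_s=\operatorname{rowspan}[I_k\mid M_s]$ with $M_s\in C=C_1\cup\dots\cup C_{\alpha-1}$. Since there are $\alpha$ codewords and only $\alpha-1$ cosets, pigeonhole yields indices $i\neq j$ and some $\ell$ with $M_i,M_j\in C_\ell$. The $M_s$ are pairwise distinct (otherwise the $U_s$ would coincide), and since $C_\ell$ is a translate of the linear MRD code $C_1$, the difference $M_i-M_j$ is a nonzero element of $C_1$, whence $\operatorname{rk}(M_i-M_j)\ge\delta$ by the MRD property.

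Second I would spell out the row-reduction
\[
\begin{pmatrix} I_k & M_i \\ I_k & M_j \end{pmatrix}
\longrightarrow
\begin{pmatrix} I_k & M_i \\ 0 & M_j-M_i \end{pmatrix},
\]
from which $\dim(U_i+U_j)=k+\operatorname{rk}(M_j-M_i)\ge k+\delta$. Because $U_i+U_j\subseteq U_1+\dots+U_\alpha$, the desired bound $\dim(U_1+\dots+U_\alpha)\ge k+\delta$ follows immediately.

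I do not expect any real obstacle here: the only thing one has to watch is that the pigeonhole step actually produces two codewords inside a \emph{single} coset (not just two with small pairwise rank distance), so that the minimum rank distance $\delta$ of the linear MRD code $C_1$ (rather than the weaker bound $\delta-1$ for the union $C$) is what controls $\operatorname{rk}(M_i-M_j)$. Everything else is bookkeeping with the lifting map.
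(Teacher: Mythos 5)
Your proposal is correct and follows essentially the same route as the paper's own proof: pigeonhole over the $\alpha-1$ disjoint MRD translates to find two lifted codewords in the same $C_\ell$, then a block row-reduction to extract $\operatorname{rk}(M_i-M_j)\ge\delta$ from the minimum rank distance. The only cosmetic difference is that you pass to the two-codeword subspace $U_i+U_j$ and use monotonicity, whereas the paper stacks all $\alpha$ blocks and drops rows, which amounts to the same estimate.
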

\begin{proof}
Given $\alpha$ distinct codewords $U_1,\dots,U_{\alpha}\in \C$,
let $u_1,\dots,u_{\alpha}\in \mathrm{RREF}(\C)$ be the corresponding $k\times n$ matrices
in RREF. Let $A_1,\dots ,A_\alpha$ be the $\alpha$ distinct codewords of $C$ satisfying
$
U_i%%=\mathrm{rowspace}(u_i)=
=\mathrm{rowspace}(I_k|A_i)
$
for each $1\leq i\leq \alpha$. For these $\alpha$ codewords of $\C$ we have that $\dim (U_1+\dots +U_{\alpha})$ is equal to
the rank of the $(\alpha k) \times n$ related matrix, i.e.
\begin{equation}\label{eq:rank-1}
\mathrm{rank}
\begin{tabular}{l}
\begin{tabular}{|l|l|}	
\hline $I_k$ & $A_1$\\
\hline $I_k$ & $A_2$\\
\hline $\vdots$ & \vdots \\
\hline $I_k$ & $A_\alpha$  \\ \hline
\end{tabular}
\end{tabular}. %_{\alpha k\times n}
\end{equation}
Note that $A_1,\dots ,A_\alpha\in C=C_1\cup\dots\cup C_{\alpha-1}$, i.e. at least two of $A_i$'s
must be from the same rank-metric code $C_j$ for some $1\leq j\leq \alpha-1$. W.l.o.g., assume $A_1$ and $A_2$ are
from the same code $C_j$ for some $1\leq j\leq \alpha-1$. Clearly (\ref{eq:rank-1}) is equal to
	\[
	\mathrm{rank}
	\begin{tabular}{l}
	\begin{tabular}{|l|l|}	
	\hline $I_k$ & $A_1$\\
	\hline $0$ & $A_2-A_1$\\
	\hline $\vdots$ & \vdots \\
	\hline $0$ & $A_\alpha-A_1$  \\ \hline
	\end{tabular}
	\end{tabular}\geq \mathrm{rank}
	\begin{tabular}{l}
	\begin{tabular}{|l|l|}	
	\hline $I_k$ & $A_1$\\
	\hline $0$ & $A_2-A_1$\\ \hline
	\end{tabular}
	\end{tabular}\geq k+\delta.
	\]
\end{proof}

\noindent
{\bf Case \ref{2a}: $k+2\delta \leq n$, $t\leq n-k-\delta$, and $\delta \leq t < k$}
\vspace{-0.3cm}
\begin{construction}
\label{const2}
Let $\C_{n-t}$ be a set of $k$-subspaces of $\F_q^{n-t}$ such that any
$\alpha$ distinct $k$-subspaces $V_1,\dots ,V_{\alpha}\in\C_{n-t}$ satisfy $\dim (V_1+\dots +V_{\alpha})\geq k+\delta$,
and $|\C_{n-t}|=B_q(n-t,k,\delta ;\alpha)$ (note that $n-t\geq k+\delta$).
\begin{enumerate}
\item For each $V \in \C_{n-t}$, let $v \in \F_q^{k\times (n-t)}$ be the unique matrix in RREF such
that $V$ is the rowspace of $v$. The set $\mathrm{RREF}(\C_{n-t})$ contains all the subspaces of $\C_{n-t}$ in this form.
		
\item Let $C_1\subseteq \F_q^{k\times t}$ be a linear MRD code with minimum rank distance $\delta$.
Let $C_1,C_2,\dots ,C_{\alpha -1}$ be $\alpha -1$ pairwise disjoint MRD codes with minimum rank distance $\delta$
obtained by translating $C_1$ in a way that (see~\cite{EtSi13})
$$
d_R(C_1\cup\dots\cup C_{\alpha -1})=\delta -1.
$$
Let $C \triangleq C_1\cup\dots\cup C_{\alpha -1}$. By concatenating each matrix in $C$ to the end of each
$u\in\mathrm{RREF}(\C_{n-t})$, $(\alpha -1) q^{k(t-\delta+1)}|\C_{n-t}|$
different matrices, of size $k\times n$, in RREF are constructed. Let $\mathrm{RREF}(\C)$ denote the set of these matrices,
whose rowspaces form the code $\C$.
\end{enumerate}
\end{construction}
\begin{claim}
\label{claim2}
If $\C$ is the set of $k$-subspaces in Construction \ref{const2}, then
$
\dim (U_1+\dots +U_{\alpha})\geq k+\delta
$,
for each $\alpha$ distinct codewords $U_1,\dots ,U_{\alpha}$ of $\C$.
\end{claim}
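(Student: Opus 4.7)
The plan is to case-split on $r$, the number of distinct matrices among $u_1,\dots,u_\alpha$ (equivalently, the number of distinct subspaces among $V_i\deff\mathrm{rowspace}(u_i)$, since each $u_i$ is in RREF). Throughout I would rely on the decomposition
\[
\dim(U_1+\dots+U_\alpha)=\dim(V_1+\dots+V_\alpha)+\dim K,
\]
where $K\subseteq\F_q^t$ collects those $w$ for which $(0,w)\in U_1+\dots+U_\alpha$; this is the standard rank--nullity splitting induced by projecting onto the first $n-t$ coordinates.

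If $r=\alpha$, all $V_i$ are pairwise distinct codewords of $\C_{n-t}$, and its defining covering property immediately yields $\dim(V_1+\dots+V_\alpha)\geq k+\delta$. If $r=1$, all $u_i$ coincide, so $A_1,\dots,A_\alpha$ are $\alpha$ pairwise distinct elements of $C=C_1\cup\dots\cup C_{\alpha-1}$. Since there are only $\alpha-1$ subcodes, pigeonhole produces indices $i\neq i'$ with $A_i,A_{i'}$ in a common $C_\ell$, so $\mathrm{rank}(A_i-A_{i'})\geq\delta$ by the MRD property of $C_\ell$. Row-subtracting $[u|A_{i'}]$ from $[u|A_i]$ gives $[0|A_i-A_{i'}]$, embedding the rowspace of $A_i-A_{i'}$ into $K$, so $\dim K\geq\delta$, which combined with $\dim(V_1+\dots+V_\alpha)=k$ finishes the case.

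The intermediate range $2\leq r\leq\alpha-1$ is where both ingredients are needed simultaneously. Since $r<\alpha$, pigeonhole forces some group of equal $u_i$'s to have size at least two; choosing distinct $i,i'$ in it gives $u_i=u_{i'}$ but $A_i\neq A_{i'}$, with both in $C$. From $d_R(C)=\delta-1$ we get $\mathrm{rank}(A_i-A_{i'})\geq\delta-1$, yielding $\dim K\geq\delta-1$ exactly as above. On the other hand, $r\geq 2$ supplies two distinct $k$-subspaces in $\{V_1,\dots,V_\alpha\}$, forcing $\dim(V_1+\dots+V_\alpha)\geq k+1$. Summing delivers the required $k+\delta$.

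The main subtlety I anticipate is precisely this last case: neither $\C_{n-t}$'s covering property nor the within-subcode MRD bound is individually sufficient, and one must combine the ``$+1$'' from two distinct $V_i$ with the ``$\delta-1$'' from two $A_i$ in possibly different subcodes. This is why the construction insists that $C_1,\dots,C_{\alpha-1}$ together have rank distance exactly $\delta-1$: the two contributions are calibrated to add to $\delta$.
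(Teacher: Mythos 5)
Your proof is correct and follows essentially the same route as the paper's: the identical three-way case split on the number of distinct left blocks $v_i$ (the paper's Cases A, B, C), using pigeonhole plus the within-subcode MRD distance $\delta$ when all $v_i$ coincide, the covering property of $\C_{n-t}$ when all are distinct, and the calibrated sum $(k+1)+(\delta-1)$ in the mixed case; the only cosmetic difference is that you phrase the rank computation via the projection/kernel decomposition $\dim(U_1+\dots+U_\alpha)=\dim(V_1+\dots+V_\alpha)+\dim K$ instead of block-matrix row reduction. (One small notational slip: $r$ should count the distinct $v_i$'s, not the distinct $u_i$'s, which number $\alpha$ by assumption.)
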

\begin{proof}
Given $\alpha$ distinct codewords $U_1,\dots,U_{\alpha}$ of $\C$, let $u_1,\dots,u_{\alpha}\in \mathrm{RREF}(\C)$
be the corresponding $k\times n$ matrices in RREF. Let
$v_1,\dots ,v_{\alpha}\in \mathrm{RREF}(\C_{n-t})$ and $A_1,\dots ,A_\alpha$ be $\alpha$ codewords of $C$ satisfying
$$
U_i=\mathrm{rowspace}(u_i)=\mathrm{rowspace}([v_i|A_i])
$$
for each $1\leq i\leq \alpha$. Clearly, $\dim (U_1+\dots +U_{\alpha})$ is equal to
\begin{equation}
\label{eq:rank-2}
\mathrm{rank}
\begin{tabular}{l}
\begin{tabular}{|l|l|}	
\hline $v_1$ & $A_1$\\
\hline $v_2$ & $A_2$\\
\hline $\vdots$ & \vdots \\
\hline $v_\alpha$ & $A_\alpha$  \\ \hline
\end{tabular}
\end{tabular}~. %_{\alpha k\times n}.
\end{equation}
We distinguish between three cases.
\begin{itemize}
\item \textbf{Case A.} If $v_1=v_2=\dots =v_\alpha$, then $A_1,\dots ,A_\alpha$ are different matrices. Note that
$A_1,\dots ,A_\alpha\in C=C_1\cup\dots\cup C_{\alpha-1}$, which implies that at least two of the $A_i$'s must be from the
same rank-metric code $C_j$ for some $1\leq j\leq \alpha-1$. W.l.o.g., assume $A_1$ and $A_2$ are from
the code $C_j$ for some $1\leq j\leq \alpha-1$. Then clearly (\ref{eq:rank-2}) is equal to
$$
\mathrm{rank}
\begin{tabular}{l}
\begin{tabular}{|l|l|}	
\hline $v_1$ & $A_1$\\
\hline $0$ & $A_2-A_1$\\
\hline $\vdots$ & \vdots \\
\hline $0$ & $A_\alpha-A_1$  \\ \hline
\end{tabular}
\end{tabular}\geq \mathrm{rank}
\begin{tabular}{l}
\begin{tabular}{|l|l|}	
\hline $v_1$ & $A_1$\\
\hline $0$ & $A_2-A_1$\\ \hline
\end{tabular}
\end{tabular}\geq k+\delta.
$$
			
\item \textbf{Case B.} Assume $v_i\neq v_j$ for all $1\leq i<j\leq \alpha$. In this case,
			\[
			\begin{array}{lll}
			\mathrm{rank}
			\begin{tabular}{l}
			\begin{tabular}{|l|l|}	
			\hline $v_1$ & $A_1$\\
			\hline $v_2$ & $A_2$\\
			\hline $\vdots$ & \vdots \\
			\hline $v_\alpha$ & $A_\alpha$  \\ \hline
			\end{tabular}
			\end{tabular} %_{\alpha k\times n}
			& \geq &
			\mathrm{rank}
			\begin{tabular}{l}
			\begin{tabular}{|l|}	
			\hline $v_1$ \\
			\hline $v_2$ \\
			\hline $\vdots$  \\
			\hline $v_\alpha$ \\ \hline
			\end{tabular}
			\end{tabular}\\ %_{\alpha k\times (n-t)} \\
			\ & \ & \ \\
			\ & = & \dim (\mathrm{rowspace}(v_1)+\dots +\mathrm{rowspace}(v_\alpha)) %\\
			%\ & \ & \ \\
			%\ & \geq & k+\delta,
            \geq  k+\delta
			\end{array}		
			\]
			by the definition of $\C_{n-t}$.
			
\item \textbf{Case C.} The only remaining case is that some of the $v_i$'s are different and some are equal.
W.l.o.g. assume that $v_1\neq v_2 =v_3$ which implies $A_2\neq A_3$. Hence,
equation (\ref{eq:rank-2}) equals to		
$$
\begin{array}{lll}
\mathrm{rank}
\begin{tabular}{l}
\begin{tabular}{|l|l|}	
			\hline $v_1$ & $A_1$\\
			\hline $v_2$ & $A_2$\\
			\hline $0$ & $A_3-A_2$\\
			\hline $\vdots$ & \vdots \\
			\hline $v_\alpha$ & $A_\alpha$  \\ \hline
\end{tabular}
\end{tabular}
& \geq &  \mathrm{rank}
\begin{tabular}{l}
\begin{tabular}{|l|l|}	
\hline $v_1$ & $A_1$\\
\hline $v_2$ & $A_2$\\
\hline $0$ & $A_3-A_2$\\
\hline
\end{tabular}
\end{tabular}
\\ \ & \geq & \mathrm{rank}
\begin{tabular}{l}
\begin{tabular}{|l|}	
\hline $v_1$ \\
\hline $v_2$ \\
\hline
\end{tabular}
\end{tabular} +\mathrm{rank}(A_3-A_2)
\\ \ & \ & \\ \ & \geq & (k+1)+(\delta -1) = k+\delta.
%\\  \ & \ & \\ \ & = & k+\delta.
\end{array}
$$
\end{itemize}
\end{proof}
\vspace{-0.3cm}
\noindent
{\bf Case \ref{2b}: $k+2\delta \leq n$ and $k \leq t\leq n-k-\delta$}
\vspace{-0.4cm}
\begin{construction}
\label{const3}
Let $\C_{n-t}$ be a set of $k$-subspaces of $\F_q^{n-t}$ such that any
$\alpha$ distinct $k$-subspaces $U_1,\dots ,U_{\alpha}\in\C_{n-t}$ satisfy $\dim (U_1+\dots +U_{\alpha})\geq k+\delta$, and $|\C_{n-t}|=B_q(n-t,k,\delta ;\alpha)$ (note that $n-t\geq k+\delta$).
\begin{enumerate}
\item For each $U \in \C_{n-t}$, let $u \in \F_q^{k\times (n-t)}$ be the unique matrix in RREF such
that $U$ is the rowspace of $u$. The set $\mathrm{RREF}(\C_{n-t})$ contains all the subspaces of $\C_{n-t}$ in this form.

\item Let $C_1\subseteq \F_q^{k\times t}$ be a linear MRD code with minimum rank distance $\delta$.
Let $C_1,C_2,\dots ,C_{\alpha -1}$ be the $\alpha -1$ pairwise disjoint MRD codes of minimum rank distance $\delta$
obtained by translating $C_1$ in a way that (see~\cite{EtSi13})
$$
d_R(C_1\cup\dots\cup C_{\alpha -1})=\delta -1.
$$
Let $C \triangleq C_1\cup\dots\cup C_{\alpha -1}$. By concatenating each matrix in $C$ to the end of each matrix
$u\in\mathrm{RREF}(\C_{n-t})$, $(\alpha -1) q^{t(k-\delta+1)}|\cC_{n-t}|$
different matrices, of size $k\times n$, in RREF are constructed. Let $\mathrm{RREF}(\C)$ denote the set of these matrices,
whose rowspaces form the code $\C$.
		
\item Consider a code $\C_{\mathrm{app}}\subseteq \cG_q(n,k)$ such that
\begin{itemize}
\item the first $n-(t+k-\delta)$ entries of each codeword in $\C_{\mathrm{app}}$ are \emph{zeroes},
\item Each $\alpha$ distinct codewords $U_1,\dots,U_\alpha$ of $\C_{\mathrm{app}}$, satisfy $\dim (U_1+\dots+U_\alpha)\geq k+\delta$.
\item $\C_{\mathrm{app}}$ is of maximum size, i.e. $|\C_{\mathrm{app}}|=\cB_q(t+k-\delta,k,\delta;\alpha)$.
\end{itemize}		
\end{enumerate}
	
Form a new code $\C'$ as the union of $\C$ in Step 2 and $\C_{\mathrm{app}}$ in Step 3.
\end{construction}
\vspace{-0.5cm}
\begin{claim}
If $\C'$ is the set of $k$-subspaces in Construction~\ref{const3} and $U_1,\ldots,U_\alpha$ are $\alpha$
distinct codewords of $\C'$, then
$
\dim (U_1+\dots +U_{\alpha})\geq k+\delta
$.
\end{claim}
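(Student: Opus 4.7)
The plan is to condition on how many of the $\alpha$ codewords come from $\C$ (as built in Step 2 of Construction~\ref{const3}) and how many from $\C_{\mathrm{app}}$ (Step 3); write $r$ for the number lying in $\C$. The two extreme regimes are immediate: for $r = 0$ the bound is the defining property of $\C_{\mathrm{app}}$, and for $r = \alpha$ I would reuse the three-subcase analysis from the proof of Claim~\ref{claim2} verbatim, since that argument only uses $\delta \leq \min\{k, t\}$ together with the fact that $C = C_1 \cup \cdots \cup C_{\alpha-1}$ is a union of pairwise disjoint MRD codes of minimum rank distance $\delta$ with overall pairwise rank distance $\geq \delta - 1$. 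All of these ingredients remain valid in Construction~\ref{const3}, where now $t \geq k$.

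For the mixed regime $1 \leq r \leq \alpha - 1$, I would pick any single $U \in \C$ and any single $U' \in \C_{\mathrm{app}}$ from among the $\alpha$ codewords and show that already $\dim(U + U') \geq k + \delta$; adding the remaining codewords can only enlarge the sum. Let $W \subset \F_q^n$ be the $(t + k - \delta)$-dimensional subspace of vectors vanishing in the first $n - t - k + \delta$ coordinates, so that every element of $\C_{\mathrm{app}}$ is a $k$-subspace of $W$. Since $U' \subseteq W$, we have $U \cap U' \subseteq U \cap W$, and the problem reduces to proving $\dim(U \cap W) \leq k - \delta$, or equivalently that the projection of $U$ onto its first $n - t - k + \delta$ coordinates has dimension at least $\delta$. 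Writing $U = \mathrm{rowspace}[u \mid A]$ with $u$ a $k \times (n-t)$ matrix in $\mathrm{RREF}(\C_{n-t})$, the pivot columns $p_1 < p_2 < \cdots < p_k$ of $u$ satisfy $p_j \leq n - t - k + j$ by a standard property of reduced row echelon form; in particular $p_\delta \leq n - t - k + \delta$, so the first $n - t - k + \delta$ columns of $[u \mid A]$ already carry $\delta$ distinct pivots and therefore have rank at least $\delta$. This yields $\dim(U + U') \geq 2k - (k - \delta) = k + \delta$, completing the mixed case.

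The conceptual heart of the argument, and essentially its only nontrivial point, is the pivot count: it is precisely the choice of $n - (t + k - \delta)$ as the support restriction in Step 3 that forces every codeword of $\C$ to use at least $\delta$ of its pivots outside the support of $\C_{\mathrm{app}}$. Once this observation is in place the three cases combine cleanly, and I do not anticipate any further technical obstacle.
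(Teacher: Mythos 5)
Your proof is correct and follows essentially the same route as the paper's: split according to whether the $\alpha$ codewords all lie in $\C$, all lie in $\C_{\mathrm{app}}$, or are mixed, handling the first two cases by the proof of Claim~\ref{claim2} and the defining property of $\C_{\mathrm{app}}$. The paper's two-line proof leaves the mixed case implicit, and your pivot-position argument (every $u\in\mathrm{RREF}(\C_{n-t})$ has its first $\delta$ pivots among the first $n-t-k+\delta$ columns, so any $U\in\C$ meets the coordinate subspace supporting $\C_{\mathrm{app}}$ in dimension at most $k-\delta$, whence $\dim(U+U')\geq k+\delta$ already for a single pair) correctly supplies exactly the detail the paper glosses over.
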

\vspace{-0.3cm}
\begin{proof}
The first two steps of Construction~\ref{const3} are the same as the ones in Construction~\ref{const2}. Therefore, the Claim
follows from the proof of the claim after Construction~\ref{const3} and the definition of $\C_{\mathrm{app}}$ in Construction~\ref{const3}.
\end{proof}
\vspace{-0.3cm}
\begin{corollary}
\label{cor:linkage}
	Let $1\leq s \leq k\leq n$ and $1\leq \lambda \leq q^k$ be integers.
	\begin{enumerate}
		\item\label{1} If $k>2t-2,$ then
		\[\cA_q(n,k,t ;\lambda)\geq \lambda q^{\max\{k,n-k\}(\min\{k,n-k\}-k+t)}. \]
		
		\item\label{2} If $k\leq 2t-2,$ then choosing an arbitrary $s$ satisfying $k-t+1 \leq s \leq t-1,$ we have that	
		\begin{enumerate}
			\item\label{2a} If $s<n-k$, then			
			\[\cA_q(n,k,t ;\lambda)\geq \lambda q^{(n-k)(s-k +t)} \cA_q(n-s,k-s,t-s ;\lambda).\]
			
			\item\label{2b} If $s\geq n-k$, then 			
			$$\cA_q(n,k,t ;\lambda)\geq \lambda q^{t(n-2k+t)} \cA_q(n-s,k-s,t-s;\lambda)$$
            $$+\cA_q(s+n-2k+t-1,s-k+t-1,s-2k-2t-1;\lambda).$$
		\end{enumerate}
	\end{enumerate}
\vspace{-0.4cm}
\end{corollary}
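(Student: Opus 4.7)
The plan is to derive the corollary mechanically from Theorem~\ref{thm:linkage} by invoking the two duality identities
$\cA_q(n,k,t;\lambda)=\cB_q(n,n-k,k-t+1;\lambda+1)$ and $\cB_q(n,k,\delta;\alpha)=\cA_q(n,n-k,n-k-\delta+1;\alpha-1)$ recalled in Section~\ref{subsec_linkage}. Substituting $(n',k',\delta',\alpha')=(n,\,n-k,\,k-t+1,\,\lambda+1)$ into Theorem~\ref{thm:linkage} replaces the left-hand side $\cB_q(n',k',\delta';\alpha')$ by $\cA_q(n,k,t;\lambda)$, so each case of the theorem directly yields an inequality for $\cA_q(n,k,t;\lambda)$.

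For Part~(1) of the corollary, I would check that the hypothesis $n'<k'+2\delta'$ rearranges to $k>2t-2$, and that the exponent $\max\{k',n'-k'\}\bigl(\min\{k',n'-k'\}-\delta'+1\bigr)$ evaluates to $\max\{k,n-k\}\bigl(\min\{k,n-k\}-k+t\bigr)$, which is precisely the stated bound.

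For Part~(2) I would rename the theorem's parameter $t$ to $s$ to avoid clashing with the $t$ of $\cA_q$. The condition $n'\ge k'+2\delta'$ then becomes $k\le 2t-2$, the range $\delta'\le s\le n'-k'-\delta'$ becomes $k-t+1\le s\le t-1$, and the dichotomy $s<k'$ vs.\ $s\ge k'$ becomes $s<n-k$ vs.\ $s\ge n-k$. To finish, I apply the inverse duality to every remaining $\cB_q$ on the right-hand side: the recursive term $\cB_q(n-s,\,n-k,\,k-t+1;\,\lambda+1)$ rewrites to $\cA_q(n-s,\,k-s,\,t-s;\,\lambda)$, while the extra summand $\cB_q\bigl(s+(n-k)-(k-t+1),\,n-k,\,k-t+1;\,\lambda+1\bigr)$ appearing only in Case~(2b) rewrites, by a direct computation of its three arguments, to an $\cA_q$-term supported on the triple $(s+n-2k+t-1,\,s-k+t-1,\,s-2k+2t-1)$. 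The $q$-power factors $q^{k'(s-\delta'+1)}$ and $q^{s(k'-\delta'+1)}$ of the theorem simplify to $q^{(n-k)(s-k+t)}$ and $q^{s(n-2k+t)}$ in Cases~(2a) and~(2b), respectively.

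No real obstacle is anticipated; the proof is pure bookkeeping. The only subtlety is verifying that the hypotheses of Theorem~\ref{thm:linkage}, namely $1\le\delta'\le k'$, $k'+\delta'\le n'$, and $2\le\alpha'\le q^{k'}+1$, follow from the corollary's assumptions $1\le s\le k\le n$ and $1\le\lambda\le q^k$ (the latter is even slightly stronger than what the duality strictly requires), and that the degenerate convention for small parameters stated in the remark after Theorem~\ref{thm:linkage} transports to a matching convention for the dualised $\cA_q$-term in Case~(2b).
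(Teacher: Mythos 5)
Your proposal is correct and is exactly the route the paper intends: the corollary is obtained from Theorem~\ref{thm:linkage} by substituting $(n,k,\delta,\alpha)\mapsto(n,\,n-k,\,k-t+1,\,\lambda+1)$, renaming the theorem's free parameter $t$ to $s$, and translating every $\cB_q$ back to an $\cA_q$ via the stated duality identities; the paper supplies no further argument. Note that your bookkeeping yields the factor $q^{s(n-2k+t)}$ in Case~(2b) and the third parameter $s-2k+2t-1$ in the appended term, whereas the printed corollary shows $q^{t(n-2k+t)}$ and $s-2k-2t-1$; your values are the ones that actually follow from the theorem, and the printed ones appear to be typographical errors.
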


\vspace{-0.2cm}

\subsection{Integer Linear Programming Lower Bounds}
\vspace{-0.1cm}
\label{sec:IntProg}
The problem of the determination of $\cA_q(n,k,t;\lambda)$ can be formulated as an integer linear programming problem. For $\lambda=1$
reader is referred to~\cite{kohnert2008construction}. For each $k$-subspace $U$ of
$\F_q^n$ a binary variable $x_U$ is defined. The value of this variables is \emph{one} if $U$ is contained
in the subspace packing and \emph{zero} if $U$
is not contained in the subspace packing. The set of equations
contains a huge number of variables and constraints:
\begin{eqnarray*}
\max \sum_{U \in \cG_q(n,k)} x_U\\
\text{subject~to}\nonumber\\
\text{for~each} ~ V \in \cG_q(n,t) ~~~ \sum_{V \subset U \in \cG_q(n,k)} x_U &\le& \lambda\nonumber\\
1\le i<t ~ \text{and} ~ W \in \cG_q(n,i) ~~~ \sum_{W\le U\le \F_q^n\,:\, \dim(U)=k} x_U &\le& A_q(n-i,k-i,t-i;\lambda), \nonumber\\
\text{where} ~ x_U &\in& \{0,1\},~\text{for~each} ~ U \in \cG_q(n,k)~~~~~~~~~~~~~~~~~~~~~~~~~~~~~~~~~~\nonumber
\vspace{-0.1cm}
\end{eqnarray*}

The second set of constraints, i.e., those for $1\le i\le t-1$, are not necessary to guarantee that the maximum target value
equals $\cA_q(n,k,t;\lambda)$, but they may significantly speed up the computation.
However, this integer linear programming can be solved
for rather small parameters due to the exponential number of variables and constraints.
But, for small parameters some interesting bounds were obtained.

\vspace{-0.4cm}

%\vspace{3ex}
%%==============================================================================%
%%                                                                              %
%%   5. Discussion and open problems                                            %
%%                                                                              %
%%==============================================================================%
\section{Discussion and Open Problems}
\label{sec:conclude}

\vspace{-0.2cm}

We have introduced new upper and lower bounds on $\cA_q(n,k,t;\lambda)$, the sizes of subspace packings.
In the extended version of this paper bounds for $t=1$, related to partial spread will
be given and also a few variants on the echelon-Ferrers construction. Some bounds
on specific parameters will be also given. At the end of this paper three
tables for specific lower and upper bounds are presented. Two interesting questions which are also related
to network coding are as follows:
\begin{itemize}
\item What are the asymptotic values of $\cA_q(n,k,t;\lambda)$?

\item What is the difference between sizes of the largest subspace packings,
with and without (any number of times) repeated codewords?
\end{itemize}

\vspace{-2.00ex}
%==============================================================================%
%                                                                              %
%   6. Acknowledgements                                                        %
%                                                                              %
%==============================================================================%
%\section*{Acknowledgments}
%
%\noindent
%

\vspace{-1.00ex}
%==============================================================================%
%                                                                              %
%   References                                                                 %
%                                                                              %
%==============================================================================%
\bibliographystyle{plain}

\begin{thebibliography}{99}

\vspace{-0.2cm}

		
\bibitem{EtSi13}
    {\sc T. Etzion and N. Silberstein,}
    ``Codes and designs related to lifted MRD codes'',
    {\em IEEE Trans.\ Inform. Theory,} 59, 1004--1017, 2013.
\bibitem{EtSt16}
  {\sc T. Etzion and L. Storme,}
  ``Galois geometries and coding theory,''
  {\em Designs, Codes, and Cryptography,} 78, 311--350, 2016.


%\bibitem{KK08a}
%    R.\,{\sc Koetter} and F.R.\,{\sc Kschischang},
%    Coding for errors and erasures in random network coding,
%    \emph{IEEE Trans.\ Inform.\ Theory} \textbf{54} (2008),
%		3579--3591.


%\bibitem{LW}
%    J.H.\,{\sc van Lint} and R.M.\,{\sc Wilson},
%		\emph{A Course in Combinatorics},
%    2nd ed.,
%		Cambridge Univ.\ Press, Cambridge,
%		2001.


%\bibitem{EtWa15}
%    {\sc T. Etzion and A. Wachter-Zeh,}
%    {\sl Vector network coding based on subspace codes outperforms scalar linear network coding,}
%    {\em Proc. of IEEE Int. Symp. on Inform. Theory (ISIT)}, pp.\,1949--1953, Barcelona, Spain, July 2016.
\bibitem{EtWa18}
    {\sc T. Etzion and A. Wachter-Zeh,}
    {\sl Vector network coding based on subspace codes outperforms scalar linear network coding,}
    {\em IEEE Trans. on Inform. Theory,} 64, 2460--2473, 2018.
\bibitem{EtZh18}
    {\sc T.~Etzion and H.~Zhang,}
    {\sl Grassmannian codes with new distance measures for network coding,}
    {\em arXiv preprint 1801.02329}, 2018.
\bibitem{MR3543532}
    {\sc H.~Gluesing-Luerssen and C.~Troha,}
    {\sl Construction of subspace codes through linkage,}
    {\em Advances in Mathematics of Communications}, 10, 525--540, 2016.
\bibitem{heinlein2016tables}
    {\sc D.~Heinlein, M.~Kiermaier, S.~Kurz, and A.~Wassermann,}
    {\sl Tables of subspace codes,}
    {\em arXiv preprint 1601.02864}, 2016.
\bibitem{heinlein2017asymptotic}
    {\sc D.~Heinlein and S.~Kurz,}
    {\sl Asymptotic bounds for the sizes of constant dimension codes and an improved lower bound,}
    in {\em Castle Meeting on Coding Theory and
      Applications}, 163--191, Springer, 2017.
\bibitem{MR3543542}
    {\sc T.~Honold, M.~Kiermaier, and S.~Kurz,}
    {\sl Constructions and bounds for mixed-dimension subspace codes,}
    {\em Adv. Math. Commun.}, 10, 649--682, 2016.
\bibitem{upper_bounds_cdc}
    {\sc M.~Kiermaier and S.~Kurz,}
    {\sl An improvement of the {J}ohnson bound for subspace codes,}
    {\em arXiv preprint 1707.00650}, 2017.
\bibitem{kohnert2008construction}
    {\sc A.~Kohnert and S.~Kurz,}
    {\sl Construction of large constant dimension codes with a prescribed minimum distance,}
    in {\em Mathematical methods in computer science}, 31--42, Springer, 2008.
\bibitem{MiMu92}
     {W. H. Mills and R. C. Mullin,}
     {\sl Coverings and packings}
     in {\em Contemporary Design Theory: A Collection of Surveys},
     editors {\sc J. H. Dinitz and D. R. Stinson,}
     {\sl John Wiley: New York, 1992}.
\bibitem{SWY06}
     {D. R. Stinson, R. Wei, and J. Yin,}
     {\sl Packings}
     in {\em The CRC Handbook of Combinatorial Designs},
     edited by {\sc C. J. Colbourn and J. H. Dinitz}
     {\sl John Wiley: New York, 2006}.
\bibitem{ward1999introduction}
    {\sc H.~N. Ward,}
    {\sl An introduction to divisible codes,}
    {\em Designs, Codes and Cryptography}, 17, 73--79, 1999.

\end{thebibliography}

%\newpage

%\begin{table}[htp]
%\vspace{-0.1cm}
%  \begin{center}
%    \begin{tabular}{|r|rrr|}
%    \hline
%    k/t & 1 & 2 & 3\\
%    \hline
%    2 & $4$ & $7$ &       \\
%    3 & $1$ & $1$ & $1$ \\
%    \hline
%    \end{tabular}
%    \caption{Bounds for $A_2(3,k,t;2)$}
%  \end{center}
%\end{table}
\begin{samepage}
%\begin{table}[htp]
%\vspace{-0.3cm}
%  \begin{center}
%    \begin{tabular}{|r|rrrr|}
%    \hline
%    k/t & 1 & 2 & 3 & 4 \\
%    \hline
%    2 & $10$ & $35$ &        &       \\
%    3 & $2$  &  $8$ & $15$ &       \\
%    4 & $1$  &  $1$ & $1$  & $1$ \\
%    \hline
%    \end{tabular}
%    \caption{Bounds for $A_2(4,k,t;2)$}
%  \end{center}
%\end{table}
%\begin{table}[htp]
%  \begin{center}
%    \begin{tabular}{|r|rrrrr|}
%    \hline
%    k/t & 1 & 2 & 3 & 4 & 5\\
%    \hline
%    2 & $20$ &$155$ &         &        &       \\
%    3 & $8 $  & $32$ & $155$ &        &       \\
%    4 & $2$  &  $2$ & $16$  & $31$ &       \\
%    5 & $1$  &  $1$ & $1$   & $1$  & $1$ \\
%    \hline
%    \end{tabular}
%    \caption{Bounds for $A_2(5,k,t;2)$}
%  \end{center}
%\end{table}
\begin{table}[htp]
  \begin{center}
    \begin{tabular}{|c|ccccc|}
    \hline
    k/t & 1 & 2 & 3 & 4 & 5\\
    \hline
    2 & $42$ &$651$ &         &        &       \\
    3 & $18$ & $180$ & $1395$ &        &       \\
    4 & $6$  &  $21$ & $121-126$  & $651$ &       \\
    5 & $2$  &  $2$ & $2$   & $32$  & $63$ \\
    \hline
    \end{tabular}
\medskip
    \caption{Bounds for $A_2(6,k,t;2)$}
  \end{center}
\end{table}
\begin{table}[htp]
\vspace{-1.2cm}
  \begin{center}
    \begin{tabular}{|c|cccccc|}
    \hline
    k/t & 1 & 2 & 3 & 4 & 5 & 6\\
    \hline
    2 & $84$ & $2667$ & & & & \\
    3 & $34$ & $741-762$ & $2667$ & & & \\
    4 & $16$ & $96-144$ & $906-1524$ & $11811$ & & \\
    5 & $2$ & $7$ & $43-85$ & $360-514$ & $2667$ & \\
    6 & $2$ & $2$ & $2$ & $2$ & $64$ & $127$ \\
    \hline
    \end{tabular}
\medskip
    \caption{Bounds for $A_2(7,k,t;2)$}
  \end{center}
\end{table}
\begin{table}[htp]
\vspace{-1.2cm}
  \begin{center}
    \begin{tabular}{|c|ccccccc|}
    \hline
    k/t & 1 & 2 & 3 & 4 & 5 & 6 & 7\\
    \hline
    2 & $170$ & $10795$ & & & & & \\
    3 & $72$ & $2663-3060$ & $97155$ & & & & \\
    4 & $34$ & $512-578$ & $6933-12954$ & $200787$ & & & \\
    5 & $10-11$ & $33-128$ & $318-1184$ & $4821-12532$ & $97155$ & & \\
    6 & $2$ & $2$ & $17-25$ & $71-341$ & $969-2078$ & $10795$ & \\
    7 & $2$ & $2$ & $2$ & $2$ & $2$ & $128$ & $255$ \\
    \hline
    \end{tabular}
\medskip
    \caption{Bounds for $A_2(8,k,t;2)$}
  \end{center}
\end{table}
\end{samepage}

\vspace{-0.6cm}

\end{document}